\newtheorem{defn}{Definition}
\newtheorem{thm}{Theorem}
\newtheorem{prop}{Proposition}
\newtheorem{cor}{Corollary}
\newtheorem{rem}{Remark}
\newcommand{\ee}{\varepsilon}
\newcommand{\R}{\mathbb{R}}
\newcommand{\N}{\mathcal{N}}
\newcommand{\LL}{\mathcal{L}}
\newcommand{\B}{\mathcal{B}}
\newcommand{\U}{\mathcal{U}}
\newcommand{\A}{\mathscr{A}}
\newcommand{\vol}{\mathrm{vol}}
\newcommand{\X}{\mathfrak{X}}
\begin{document}
\title{On the Sample Complexity of Data-Driven Inference of the $\mathcal{L}_2$-gain} 
\author{Miel Sharf
\thanks{M. Sharf is with the Faculty of Aerospace Engineering, Israel Institute of Technology, Haifa, Israel.
    {\tt\small msharf@tx.technion.ac.il, mielsharf@gmail.com}.}
}

\maketitle
\thispagestyle{empty} 
\begin{abstract}
Lately, data-driven control has become a widespread area of research. A few recent big-data based approaches for data-driven control of nonlinear systems try to use classical input-output techniques to design controllers for systems for which only a finite number of (input-output) samples are known. These methods focus on using the given data to compute bounds on the $\mathcal{L}_2$-gain or on the shortage of passivity from finite input-output data, allowing for the application of the small gain theorem or the feedback theorem for passive systems. One question regarding these methods asks about their sample complexity, namely how many input-output samples are needed to get an approximation of the operator norm or of the shortage of passivity. We show that the number of samples needed to estimate the operator norm of a system is roughly the same as the number of samples required to approximate the system in the operator norm.
\end{abstract}


\section{Introduction}
In recent years, technological advancements have allowed to store large amounts of data. These advancements incited new problems related to analysis of the data, inference from the data, and mining said data, grouped together under the field of ``big data" \cite{Chen2014}. In engineering applications, data can be gathered from experiments or numerical models. Different methods for using big data in controller design have been offered, as summarized by \cite{Hou2013,Tabuada2017} and the references therein, which focus on the case where the system is governed by a differential equation of a known form. However, in many systems for which the big data approaches are needed, data is gathered from large, complex, and uncertain models, meaning that this assumption cannot be justified. One approach to this problem revolves around identifying an approximate model for the system, and using it for controller synthesis. These methods include system-level synthesis \cite{Dean2017,Recht2019}, which usually requires the system to be linear and time-invariant, as well as system identification techniques. Some of these techniques are tailored for linear and time invariant systems, i.e. frequency-domain methods \cite{Ljung1999,McKelvey1996}, but assuming a system is linear and time invariant can be far from true, especially for systems which require the application of big-data techniques. Other system identification methods are tailored for nonlinear systems, see \cite{Sjoberg1995,Schoukens2019} and reference therein, but can be very complex and require large amounts of data.

Recently, data-driven model-free control was proposed to overcome this problem. The idea of data-driven model-free control is to use data to solve the controller synthesis problem directly, without identifying an approximate model first. For linear and time-invariant (LTI) systems, one prominent method of data-driven model-free control comes from Willems' lemma, which characterizes all possible trajectories from a persistently exciting input \cite{Willems2005}. Different methods use this lemma to characterize all stabilizing controllers and solve the linear-quadratic regulation problem \cite{DePersis2019}, do model-predictive control \cite{Coulson2018}, and design the control law to remain within a given set \cite{Bisoffi2019}. Another idea, which also works for non-LTI systems, is to use the data sampled from the system to give an overestimate on its $\LL_2$-gain or its shortage of passivity, which is then used for controller design together with the small gain theorem or the feedback theorem for passive systems \cite{Khalil2002}. In this direction, some methods assume the system is LTI, and use the system matrix to write the passivity index or $\LL_2$-gain as a Rayleigh quotient, which is then computed using Willems' lemma, gradient descent or Gaussian processes \cite{Romer2019,Romer2017b,Romer2019b,Tanemura2019}. Other methods were proposed for nonlinear systems, based on sampling inputs which are $\ee$-nets \cite{Montenbruck2016} or on convex cone theory \cite{Romer2017a}.

One important question regarding these methods is the sample complexity, namely the amount of data needed to compute the $\LL_2$-gain or the shortage of passivity. Willems' lemma shows that a single input-output trajectory can be enough to compute these quantities if the system is known to be linear and time-invariant \cite{Romer2019}. However, the sample complexity of computing the $\LL_2$-gain or the shortage of passivity for nonlinear systems is not known. In this paper, we show that the sample complexity of computing an $\ee$-close estimate of the $\LL_2$-gain for general, nonlinear systems is roughly as large as the sample complexity of computing an $\ee$-close approximation of the system in the operator norm. We do so by showing that these sample complexities can be understood using a geometric notion called the cover index, namely the number of balls of radius $\ee$ needed to cover a certain set. We also give an estimate on this cover index, which translates to an estimate of the amount of data needed to give an $\ee$-close estimate of the $\LL_2$-gain.

\paragraph*{Notation}
We fix a time interval $I$, and denote the collection of square-integrable signals from $I$ to $\R$ as $\mathcal{L}_2(I)$, or as $\LL_2$ when $I$ is clear. The $\mathcal{L}_2$ norm of a signal $u \in \LL_2(I)$ is denoted as $\|u\| = \sqrt{\int_I |u(s)|^2{\rm d}s}$. Moreover, given a subset $\U$ of $\mathcal{L}_2$, we denote the collection of all $L$-Lipschitz operators $H:\U\to \LL_2$ by ${\rm Lip}_L(\U,\LL_2)$. 

\section{Background and Problem Formulation}
We consider a system $H$ as an operator, which takes an input signal $u\in \LL_2(I)$ and returns an output $y\in \LL_2(I)$. We consider a set of $\U \subseteq \LL_2(I)$ of admissible inputs, which will usually contain signals with a bound on their size, their frequency, and/or their energy.  We think of the system $H$ as defined only on $\U$, namely $H:\U\to \LL_2(I)$. The $\LL_2$-gain of the system $H$ is defined as the operator norm of the map $H:\U \to\LL_2(I)$, i.e. as $\|H\| = \max_{0\neq u \in \U} \frac{\|H(u)\|}{\|u\|}$. 
For simplicity, we follow \cite{Montenbruck2016} and assume that $\U$ is a closed bounded set, and that either $0 \not\in \U$, or that the maximum defining the $\LL_2$-gain is achieved on the set $\U\cap\{u\in \LL_2:\ \|u\| \ge \mu\}$ for some known $\mu > 0$. In the latter case, we can replace $\U$ by $\U\cap\{u\in \LL_2:\ \|u\| \ge \mu\}$, and may assume without loss of generality that $0\not\in \U$. This assumption is not too restrictive, as the system at hand needs to be controlled, and cannot be left evolving autonomously. If one knows an overestimate of the $\LL_2$-gain of the system $H$, one can apply the small gain theorem in order to synthesize different controllers for $H$ \cite{Khalil2002}.
In our problem, we assume that we are given knowledge of the operator $H$ when restricted to some finite set $\U^\prime = \{u_1,\cdots,u_N\}$, i.e. of the outputs $y_i = H(u_i)$ for $i=1,\cdots,N$ defining the operator $H|_{\U^\prime}:\U^\prime \to \LL_2$. These outputs can come either from experiments or from a detailed simulation of the system, which might be too complex for designing controllers. We study algorithms which utilize knowledge of the set $\U^\prime$ and of the restricted operator $H|_{\U^\prime}$ to give a bound on the $\LL_2$-gain of the unrestricted operator $H$. We also allow the algorithm to choose the inputs $u_1,\cdots,u_N$. To do so, we first define the notion of a sampling algorithm.
\begin{defn}
Let $\U$ be any subset of $\LL_2$, and let $H \in {\rm Lip}_L(\U,\LL_2)$ be an $L$-Lipschitz operator $H:\U \to \LL_2$. An $N$-sample sampling algorithm $\mathcal{S}_N$ is an algorithm choosing inputs $u_1,\cdots,u_N \in  \U$ and sampling the outputs $y_i = H(u_i)$ for $i=1,\cdots,N$. Each input $u_i$ can only depend on previous data, i.e. on $\{u_k,y_k:\ k\le i-1\}$.
\end{defn}
Out of the sampled data, we wish to construct an overestimate of the $\LL_2$-gain of the system $H$, which is close to the true $\LL_2$-gain of the system. We also consider a more complex problem, in which we try to find an approximation of the system $H$ in the operator norm:
\begin{defn}
Let $\U$ be any subset of $\LL_2$, and let $\ee,L>0$ be any positive numbers.
\begin{itemize}
\item[i)] An $N$-sample $\LL_2$-gain estimation algorithm $\A$ on ${\rm Lip}_L(\U,\LL_2)$ is comprised of an $N$-sample sampling algorithm and a map $f:(u_1,y_1,\cdots,u_N,y_N) \mapsto \gamma \in \mathbb{R}$. We say that $\A$ provides $\ee$-close overestimation of the $\LL_2$-gain if for any $H \in {\rm Lip}_L(\U,\LL_2)$, the number $\gamma = f(u_1,y_1,\cdots,u_N,y_N)$ satisfies $\|H\| \le \gamma \le \|H\| + \ee$, where $(u_1,y_1,\cdots,u_N,y_N)$ are the data gathered by the sampling algorithm.
\item[ii)] An $N$-sample norm-approximation algorithm $\A$ on ${\rm Lip}_L(\U,\LL_2)$ is comprised of an $N$-sample sampling algorithm and a map $f:(u_1,y_1,\cdots,u_N,y_N) \mapsto H_1\in {\rm Lip}_L(\U,\LL_2)$. We say that $\A$ provides $\ee$-close operator approximation if for any $H \in {\rm Lip}_L(\U,\LL_2)$, the output $H_1 = f(u_1,y_1,\cdots,u_N,y_N)$ satisfies $\|H-H_1\| \le \ee$, where $(u_1,y_1,\cdots,u_N,y_N)$ are the data gathered by the sampling algorithm.
\end{itemize}
\end{defn}
After defining what algorithms solve our problem, we can define the corresponding sample complexity as the minimum number of samples needed to solve the problem, namely:
\begin{defn}
Let $\U$ be any subset of $\LL_2$, and let $\ee,L>0$ be any positive numbers.
\begin{itemize}
\item[i)] The smallest number $N$ such there exists an $N$-sample $\LL_2$-gain estimation algorithm $\A$ providing $\ee$-close overestimation of the $\LL_2$-gain for any $H\in {\rm Lip}_L(\U,\LL_2)$ will be denoted as $\N_{\LL_2}(\U,\ee,L)$. If no such $N$ exists, we take $\N_{\LL_2}(\U,\ee,L) = \infty$.
\item[ii)] The smallest number $N$ such there exists an $N$-sample norm-approximation algorithm $\A$ providing $\ee$-close operator approximation for any $H\in {\rm Lip}_L(\U,\LL_2)$ will be denoted as $\N_{\rm op}(\U,\ee,L)$. If no such $N$ exists, we take $\N_{\rm op}(\U,\ee,L) = \infty$.
\end{itemize}
\end{defn}

\begin{rem}
We should note that if $\ee$ is on the same scale as $L$, the problem of giving an $\ee$-close overestimate of the $\LL_2$-gain becomes easy. Indeed, assume for a second that the set $\U$ is symmetric, in the sense that $u\in \U$ if and only if $-u\in \U$. Take $u\in {\rm arg}\min_{v\in \U} \|v\|$, and sample both $H(u)$ and $H(-u)$. Then for any other point $u^\prime \in \U$, either $\langle u,u^\prime\rangle \ge 0$ or that $\langle -u,u^\prime\rangle \ge 0$, so by the cosine theorem we conclude that either $\|u-u^\prime\| \le \sqrt{\|u\|^2 + \|u^\prime\|^2}$ or $\|u+u^\prime\| \le \sqrt{\|u\|^2 + \|u^\prime\|^2}$, and in particular either $\frac{\|u-u^\prime\|}{\|u^\prime\|} \le \sqrt{2}$ or $\frac{\|u+u^\prime\|}{\|u^\prime\|} \le \sqrt{2}$. In the former case, we get:
\small
\begin{align*}
\frac{\|H(u^\prime)\|}{\|u^\prime\|} \le \frac{\|H(u)\|+L\|u-u^\prime\|}{\|u^\prime\|}\le \frac{\|H(u)\|}{\|u\|}+\sqrt{2}L,
\end{align*}\normalsize
and in the latter case, we get:
\small
\begin{align*}
\frac{\|H(u^\prime)\|}{\|u^\prime\|} \le \frac{\|H(-u)\|+L\|u+u^\prime\|}{\|-u^\prime\|}\le \frac{\|H(-u)\|}{\|-u\|}+\sqrt{2}L
\end{align*}\normalsize
Thus, we can take $\gamma = \max\left\{\frac{\|H(u)\|}{\|u\|},\frac{\|H(-u)\|}{\|-u\|}\right\} + \sqrt{2}L$. It's clear that $\gamma \le \|H\|+\sqrt{2}L$, as $\|H\| \ge \max\left\{\frac{\|H(u)\|}{\|u\|},\frac{\|H(-u)\|}{\|-u\|}\right\}$, and we showed that $\gamma \ge \|H\|$. Similarly, the problem is still relatively easy so long that $\ee \approx L$. For this reason, we are interested in the case $\ee \ll L$.
\end{rem}

\begin{rem}
Proving an upper bound on $\N_{\LL_2}(\U,\ee,L)$ or $\N_{\rm op}(\U,\ee,L)$ is relatively simple, as one only needs to present some $N$-sample algorithm solving the corresponding problem. However, providing a lower bound on $\N_{\LL_2}(\U,\ee,L)$ or $\N_{\rm op}(\U,\ee,L)$ can be harder, as one needs to show that no $N$-sample algorithm can solve the corresponding problem, no matter what actions it takes. Moreover, it's clear that $\N_{\LL_2}(\U,\ee,L)$ and $\N_{\rm op}(\U,\ee,L)$ both increase as $\ee$ decreases, as if $\ee_1<\ee_2$, any $\ee_1$-close approximation is also a $\ee_2$-close approximation, and any $\ee_1$-close estimate of the $\LL_2$-gain is also a $\ee_2$-close estimate of the $\LL_2$-gain.
\end{rem}

\begin{rem} \label{rem.NaturalHarder}
We note that norm-approximation is at least as hard as $\LL_2$-gain estimation, at least asymptotically in $\ee$. Precisely, we show that $\N_{\LL_2}(\U,\ee,L)\le\N_{\rm op}(\U,\ee/2,L)$. Indeed, if we have an $N$-sample norm-approximation algorithm $\A$ providing $\ee/2$-close operator approximation, for any $H\in {\rm Lip}_L(\U,\LL_2)$ the algorithm outputs some $H_1$ such that $\|H-H_1\| \le \ee/2$. We define an algorithm $\mathscr{B}$ which runs $\A$, and then outputs $\gamma = \|H_1\|+\ee/2$ as an estimate to the $\LL_2$-gain of $H$. The triangle inequality for $\|H-H_1\|\le \ee/2$ shows that $\gamma \ge \|H\|$ and that $\gamma \le \|H\|+\ee$. Thus $\mathscr{B}$ is an $N$-sample $\mathcal{L}_2$-gain estimation algorithm providing $\ee$-close overestimation of the $\LL_2$-gain for any $H\in {\rm Lip}_L(\U,\LL_2)$, proving the claim.
\end{rem}

\section{The Projective Distance}
The basic question that determines the number of samples needed to solve a learning problem regards \emph{generalization}. Namely, how much can we learn from one (or a few) measurements? In our case, a sample $(u,y=H(u))$ can obviously be used to give a lower bound to the global $\mathcal{L}_2$-gain by $\|y\|/\|u\|$. However, it can also be used to give an upper bound on the $\mathcal{L}_2$-gain, at least near $u$. Namely, it is shown in \cite{Montenbruck2016} that if the operator $H$ is $L$-Lipschitz, then for every $u^\prime$ such that $\|u^\prime - u\| \le \delta$,
\small
\begin{align}
\frac{\|H(u^\prime)\|}{\|u^\prime\|} \le \frac{\|u\|}{\|u\|-\delta}\frac{L\|u-u^\prime\| + \|H(u)\|}{\|u\|}.
\end{align}\normalsize
Thus, a sample $(u,y=H(u))$ essentially gives an estimate for each point in the set $\B_{\|\cdot\|,\delta\|u\|,u}$, which is the closed norm ball around $u$ of radius $\delta\|u\|$, meaning that the number of samples needed to give an $\ee$-approximation of the $\mathcal{L}_2$-gain of the unknown operator $H$ can be bounded by the number of norm balls $\B_{\|\cdot\|,\delta\|u\|,u}$ needed to cover the set $\U$. Estimates on the number of copies of a shape $B$ needed to cover a shape $A$ inside a finite-dimensional vector space have been extensively studied over the last few decades, due to their connection to learning theory and Gaussian processes \cite{ShalevShwartz2014,Dudley1967}. However, the case in which different shapes (or shapes of different sizes) are used for the covering is significantly less explored. Our goal in this section is to present an alternative definition of distance, $d(x,y)$ for any two points $x,y\in\LL_2$ (or more exactly, for $x,y\in\LL_2\setminus\{0\}$) for which the sets $\B_{\|\cdot\|,\delta\|u\|,u}$ are (approximately) balls of uniform sizes for the new definition of distance. One property that will become important is invariance under scalar multiplication, i.e. that $d(x,y) = d(\alpha x,\alpha y)$ for any constant $\alpha \neq 0$ and any $x,y\in \LL_2$. To do so, we define a distance $d$ on $\LL_2\setminus\{0\}$:
\begin{defn}
Let $\X$ be any (possibly infinite-dimensional) normed space. Define the \emph{projective distance} on $\X\setminus\{0\}$ by:
\begin{align*}
d(x_1,x_2) = \frac{\|x_1-x_2\|}{\max\{\|x_1\|,\|x_2\|\}}
\end{align*}
\end{defn}

We first ask what properties does the projective distance $d$ satisfy:
\begin{defn}[\hspace{-0.2pt}\cite{Wilansky2008}]\label{def.SemiMetric}
Let $X$ be a set. A map $d:X\times X \to [0,\infty)$ is called a semi-metric if it satisfies the following properties:
\begin{itemize}
\item[i)] For all $x_1,x_2 \in X$, $d(x_1,x_2) \ge 0$.
\item[ii)] For all $x_1,x_2\in X$, $d(x_1,x_2) = 0$ if and only if $x_1=x_2$.
\item[ii)] For all $x_1,x_2 \in X$, $d(x_1,x_2) = d(x_2,x_1)$.
\end{itemize}
The pair $(X,d)$ is called a semi-metric space.
\end{defn}

\begin{prop}
The projective distance $d:(\X\setminus\{0\})\times(\X\setminus\{0\}) \to [0,\infty)$ is a semi-metric. Moreover, $d$ is invariant under scalar multiplication.
\end{prop}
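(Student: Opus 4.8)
The plan is to verify the three semi-metric axioms of Definition~\ref{def.SemiMetric} directly from the norm axioms on $\X$, and then dispatch scalar invariance with a one-line computation. The key structural observation that makes everything well-posed is that $d$ is only defined on $\X\setminus\{0\}$: for any $x_1,x_2$ in this domain the denominator $\max\{\|x_1\|,\|x_2\|\}$ is strictly positive, so the quotient makes sense and we may divide freely. I would state this once at the start.

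First I would handle non-negativity: the numerator $\|x_1-x_2\|$ is $\ge 0$ and the denominator is $>0$, hence $d(x_1,x_2)\ge 0$. For the identity of indiscernibles, if $x_1=x_2$ then $\|x_1-x_2\|=0$ and so $d(x_1,x_2)=0$; conversely, since the denominator never vanishes on the domain, $d(x_1,x_2)=0$ forces $\|x_1-x_2\|=0$, and the definiteness of the norm gives $x_1=x_2$. For symmetry, I would use $\|x_1-x_2\|=\|{-}(x_2-x_1)\|=\|x_2-x_1\|$ in the numerator together with the obvious symmetry of $\max$ in the denominator, so $d(x_1,x_2)=d(x_2,x_1)$.

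Finally, for invariance under scalar multiplication, fix $\alpha\neq 0$ and $x_1,x_2\in\X\setminus\{0\}$ (note $\alpha x_1,\alpha x_2$ are again nonzero, so $d(\alpha x_1,\alpha x_2)$ is defined). Absolute homogeneity of the norm gives $\|\alpha x_1-\alpha x_2\|=|\alpha|\,\|x_1-x_2\|$ and $\max\{\|\alpha x_1\|,\|\alpha x_2\|\}=|\alpha|\max\{\|x_1\|,\|x_2\|\}$; the factors of $|\alpha|$ cancel, yielding $d(\alpha x_1,\alpha x_2)=d(x_1,x_2)$.

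I do not expect any genuine obstacle: each step is immediate from the norm axioms, and the proof is essentially a sequence of one-line verifications. The only point worth flagging — perhaps as a remark rather than part of the proof — is that $d$ is \emph{not} in general a metric, since the triangle inequality can fail; this is exactly why the weaker notion of a semi-metric is the right one here, and a short counterexample on $\R$ with three collinear points of very different norms would make the distinction concrete if desired.
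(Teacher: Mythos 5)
Your proof is correct and follows exactly the route the paper intends: the paper simply states that the result ``follows immediately from the properties of the norm and the definition of $d$,'' and your write-up supplies precisely those one-line verifications (positivity of the denominator on $\X\setminus\{0\}$, definiteness and symmetry of the norm, and cancellation of $|\alpha|$ for scalar invariance). Nothing is missing; your closing remark about the failure of the triangle inequality matches the paper's own discussion immediately after the proposition.
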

\begin{proof}
Follows immediately from the properties of the norm $\|\cdot\|$ and the definition of $d$.
\end{proof}

The most common definition of distance on an arbitrary set is known as a metric \cite{Wilansky2008}. A metric is any semi-metric, as in Definition \ref{def.SemiMetric}, which also satisfies the triangle inequality, i.e. for all $x,y,z\in X$, the following inequality holds:
\begin{align*}
d(x,z) \le d(x,y) + d(y,z).
\end{align*}
Unfortunately, one can show that whenever $\dim \X \ge 2$, the projective distance is not a metric. However, this fact will not hinder our use of the projective distance. Indeed, one important property of the projective distance that will be used repeatedly is that the ball of radius $\ee\ll 1$ around $x\in \X\setminus\{0\}$ with respect to the distance $d$, is roughly equal to the normed ball of radius $\ee\|x\|$ around $x$. Namely:
\begin{prop}\label{thm.EquivalentOfDistances}
Let $\ee < 1$ be positive, and let $x,y \in \X\setminus\{0\}$.
\begin{itemize}
\item[i)] If $d(x,y)\le \ee$ then $\|x-y\| \le \frac{\ee}{1-\ee}\|x\|$ and $\|x-y\| \le \frac{\ee}{1-\ee}\|y\|$
\item[ii)] If $\|x-y\| \le \ee \|x\|$ then $d(x,y) \le \ee$.
\end{itemize}
\end{prop}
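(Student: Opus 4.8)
Both parts follow directly from the definition $d(x,y)=\|x-y\|/\max\{\|x\|,\|y\|\}$ together with the (reverse) triangle inequality; there is no real obstacle, only a short case split. I would dispatch part (ii) first, since it is essentially immediate, and then handle part (i) by a symmetric two-case argument.

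For part (ii): assume $\|x-y\|\le\ee\|x\|$. Since $\max\{\|x\|,\|y\|\}\ge\|x\|$, we get
\begin{align*}
d(x,y)=\frac{\|x-y\|}{\max\{\|x\|,\|y\|\}}\le\frac{\|x-y\|}{\|x\|}\le\ee,
\end{align*}
which is exactly the claim.

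For part (i): assume $d(x,y)\le\ee$, i.e. $\|x-y\|\le\ee\max\{\|x\|,\|y\|\}$. I would split into the cases $\|x\|\ge\|y\|$ and $\|y\|\ge\|x\|$; by symmetry of $d$ it suffices to treat, say, $\|y\|\ge\|x\|$, so that $\max\{\|x\|,\|y\|\}=\|y\|$ and hence $\|x-y\|\le\ee\|y\|$. The bound in terms of $\|y\|$ is then free, because $\ee\le\frac{\ee}{1-\ee}$ for $\ee\in(0,1)$, giving $\|x-y\|\le\ee\|y\|\le\frac{\ee}{1-\ee}\|y\|$. For the bound in terms of $\|x\|$, I would use the reverse triangle inequality: $\|x\|\ge\|y\|-\|x-y\|\ge\|y\|-\ee\|y\|=(1-\ee)\|y\|$, hence $\|y\|\le\frac{1}{1-\ee}\|x\|$, and therefore $\|x-y\|\le\ee\|y\|\le\frac{\ee}{1-\ee}\|x\|$. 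The case $\|x\|\ge\|y\|$ is identical with the roles of $x$ and $y$ exchanged, and this completes the proof. The only point requiring the slightest care is remembering that in each of the two cases one still must establish \emph{both} the $\|x\|$-bound and the $\|y\|$-bound, one of which is trivial and the other of which needs the reverse triangle inequality step above.
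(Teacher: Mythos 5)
Your proof is correct and follows essentially the same route as the paper: part (ii) is the one-line observation that $\max\{\|x\|,\|y\|\}\ge\|x\|$, and part (i) is the same case split on which norm is larger, with the reverse triangle inequality $\|x\|\ge(1-\ee)\|y\|$ supplying the nontrivial bound. The only cosmetic difference is that the paper proves the $\|x\|$-bound in both cases and then invokes the symmetry of $d$ to get the $\|y\|$-bound, whereas you establish both bounds within each case directly.
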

The proposition is illustrated in Fig. \ref{fig.Visualization}. In particular, if $\ee \ll 1$, $\frac{\ee}{1-\ee} \approx \ee$, so the ball around $x$ of radius $\ee$ with respect to the projective distance is roughly equal to the normed ball of radius $\ee\|x\|$.
 We now prove the theorem.
\begin{figure} [!t] 
    \centering
    \includegraphics[scale=0.32]{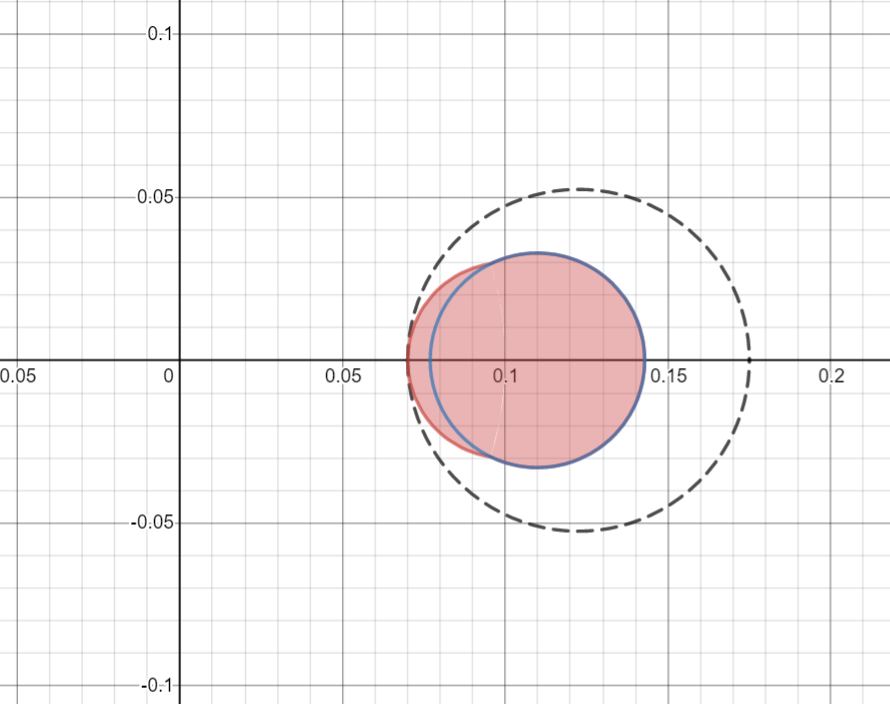}
    \caption{Visualization of Proposition \ref{thm.EquivalentOfDistances} for $\X = \R^2$. The red set is $d(x,y)\le \ee = 0.3$, where $y=(0.1,0)$, the blue line is the boundary of the set $\{\|x-y\|\le \ee \|x\|\}$ and the dashed black line is the boundary of the set $\{\|x-y\|\le \frac{\ee}{1-\ee} \|x\|\}$. The red set contains the interior of the blue line, and is contained within the dashed black line.}
    \label{fig.Visualization}
	\vspace{-16pt}
\end{figure}

\begin{proof}
We start with i). By definition, $\ee \max\{\|x\|,\|y\|\} \ge \|x-y\|$. If $\|y\| \le \|x\|$, we are done , as $\ee < \frac{\ee}{1-\ee}$. Otherwise,
\begin{align} \label{eq,WrongD}
\|x-y\| \le \ee \|y\|,
\end{align}
which gives $\|x\| \ge (1-\ee)\|y\|$ by the triangle inequality. Recalling \eqref{eq,WrongD}, we get $\|x-y\| \le \frac{\ee}{1-\ee} \|x\|$. Thus $\|x-y\| \le \frac{\ee}{1-\ee} \|x\|$ holds in both cases. Reversing the roles of $x,y$ gives $\|x-y\|\le \frac{\ee}{1-\ee}\|y\|$. As for ii), note that $\|x\| \le \max\{\|x\|,\|y\|\}$, so $\|x-y\| \le \ee\|x\| \le \ee\max\{\|x\|,\|y\|\}$, and $d(x,y)\le\ee$.
\end{proof}

\begin{cor} \label{cor.EquivDistances2}
Let $x,y \in \X\setminus\{0\}$ be arbitrary, and let $\eta < 1$ be any positive number.
\begin{itemize}
\item[i)] If $d(x,y)>\eta$ then $\|x-y\| > \eta\|x\|$ and $\|x-y\| > \eta\|y\|$.
\item[ii)] If $\|x-y\| > \eta \|x\|$ then $d(x,y) > \frac{\eta}{1+\eta}$.
\end{itemize}
\end{cor}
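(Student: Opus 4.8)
The plan is to read off both statements as contrapositives of Proposition \ref{thm.EquivalentOfDistances}, using also the symmetry $d(x,y)=d(y,x)$ recorded just before it.

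For part i), I would argue by contradiction. Suppose $d(x,y)>\eta$ but $\|x-y\|\le\eta\|x\|$. Since $\eta<1$, Proposition \ref{thm.EquivalentOfDistances}(ii) applies with $\ee=\eta$ and yields $d(x,y)\le\eta$, contradicting the hypothesis; hence $\|x-y\|>\eta\|x\|$. To obtain the estimate against $\|y\|$ instead, I would invoke the symmetry of the projective distance, so that $d(y,x)>\eta$ as well, and run the identical argument with the roles of $x$ and $y$ swapped, giving $\|x-y\|=\|y-x\|>\eta\|y\|$.

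For part ii), again by contradiction: suppose $\|x-y\|>\eta\|x\|$ but $d(x,y)\le\frac{\eta}{1+\eta}$. Put $\ee=\frac{\eta}{1+\eta}$; since $\eta>0$ we have $0<\ee<1$, so Proposition \ref{thm.EquivalentOfDistances}(i) applies and gives $\|x-y\|\le\frac{\ee}{1-\ee}\|x\|$. The one computation to carry out is $1-\ee=1-\frac{\eta}{1+\eta}=\frac{1}{1+\eta}$, so that $\frac{\ee}{1-\ee}=\eta$, whence $\|x-y\|\le\eta\|x\|$ — contradicting the hypothesis. Therefore $d(x,y)>\frac{\eta}{1+\eta}$.

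I do not anticipate a real obstacle here; the points that merely require a moment's care are verifying the admissibility condition $\ee<1$ before each application of Proposition \ref{thm.EquivalentOfDistances} (immediate, as $\eta<1$ and $\frac{\eta}{1+\eta}<1$) and performing the tight simplification $\frac{\eta/(1+\eta)}{1-\eta/(1+\eta)}=\eta$ used in part ii).
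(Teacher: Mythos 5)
Your proof is correct and follows the paper's own argument exactly: part i) is the contrapositive of Proposition \ref{thm.EquivalentOfDistances}(ii) with $\ee=\eta$ (plus symmetry of $d$ for the $\|y\|$ bound), and part ii) is the contrapositive of Proposition \ref{thm.EquivalentOfDistances}(i) with $\ee=\frac{\eta}{1+\eta}$ so that $\frac{\ee}{1-\ee}=\eta$. The paper states this in one line; you have merely spelled out the same substitutions in full.
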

\begin{proof}
The first part follows from part ii) of Proposition \ref{thm.EquivalentOfDistances} by choosing $\ee = \eta$. The second part follows from part i) of Proposition \ref{thm.EquivalentOfDistances} for $\ee = \frac{\eta}{\eta+1}$, so that $\frac{\ee}{1-\ee} = \eta$.
\end{proof}

\begin{rem}
From now on, balls with respect to the projective distance will be called \emph{metric balls}, and balls with respect to the norm-induced distance $\|x-y\|$, will be called \emph{norm balls}.
\end{rem}

\begin{defn} \label{def.Cover}
Let $(X,d)$ be a semi-metric space. We denote the closed metric ball around $x_0 \in X$ of radius $r$ by $\B_{d,x_0,r}$. For a set $U\subset X$ and a number $r>0$, an $r$-metric cover of $U$ is a collection of points $\{x_i\}_{i=1}^n$ in $X$ such that $U \subseteq \bigcup_{i=1}^n \B_{d,x_i,r}$. The cover index $\N_d(U,r)$ is defined as the smallest possible size of an $r$-metric cover of $U$. If $X$ is a normed space, one similarly defines $r$-norm covers using norm balls $B_{\|\cdot\|,x_i,r}$, and the norm-cover index $\N_{\|\cdot\|}(U,r)$ as the smallest possible size of an $r$-norm cover of $U$.
\end{defn}

\section{Sample Complexity Bounds}
In this section, we prove that the sample complexity of giving $\ee$-close estimation of the $\LL_2$-gain is roughly equal to the sample complexity of giving $\ee/2$-close approximation of the system in the operator norm. We do so by showing that both sample complexities can be understood in terms of the metric-cover index $\N_d(U,\cdot)$ with an appropriately chosen radius. From now on, $d$ will denote the projective distance for $\X = \LL_2$. We note that by Remark \ref{rem.NaturalHarder}, it's enough to give a lower bound on the sample complexity of providing $\ee$-close estimation of the $\LL_2$-gain, and an upper bound on the sample complexity of providing $\ee/2$-close approximation of the system in the operator norm. We start with the former, showing the sample complexity is at least as big as some cover index:
\begin{thm}\label{thm.LowerBound}
Let $\U$ be any subset of $\LL_2(I)$ which does not contain $0$, let $L,\ee > 0$, and let $N>0$ be an integer. Let $\A$ be any $N$-sample $\LL_2$-gain estimation algorithm which provides $\ee$-close overestimation of the $\LL_2$-gain for any $H \in {\rm Lip}_L(\U,\LL_2)$. Then $N\ge\N_d(\U,\frac{\ee}{L})$. In particular, $\N_{\LL_2}(\U,\ee,L) \ge \N_d(\U,\frac{\ee}{L})$.
Moreover, if $\U$ has infinitely many elements, there does not exists an $N$-sample $\LL_2$-gain estimation algorithm which provides $\ee$-close overestimation of the $\LL_2$-gain for all Lipschitz operators $H:\U\to \LL_2$.
\end{thm}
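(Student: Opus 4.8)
The plan is to prove both assertions by the same adversary (indistinguishability) argument. The crucial observation is that a sample $(u_i,y_i)$ with $y_i=H(u_i)$ reveals nothing about $H$ beyond the value $H(u_i)$ and the standing $L$-Lipschitz constraint; hence if two operators $H,H'\in{\rm Lip}_L(\U,\LL_2)$ agree on the queried inputs $u_1,\dots,u_N$, the algorithm $\A$ produces the \emph{same} estimate $\gamma$ for both, and this single number cannot be an $\ee$-close overestimate of $\|H\|$ and of $\|H'\|$ simultaneously once $\bigl|\,\|H\|-\|H'\|\,\bigr|>\ee$. The one genuine complication is that the queries are \emph{adaptive} (each $u_i$ may depend on $y_1,\dots,y_{i-1}$), so I cannot fix the queries in advance and then build an adversarial $H'$ around them. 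I resolve this by first probing with the zero operator: run $\A$ against $H_0\equiv 0$ (which lies in ${\rm Lip}_L(\U,\LL_2)$ since $0\notin\U$). This pins down a concrete query sequence $u_1,\dots,u_N\in\U$, all with output $0$, and a concrete estimate $\gamma=f(u_1,0,\dots,u_N,0)$; $\ee$-close overestimation for $H_0$ forces $0\le\gamma\le\ee$. By induction on $i$, \emph{any} operator $H'$ with $H'(u_i)=0$ for all $i$ generates exactly the same transcript $(u_1,0,\dots,u_N,0)$ against $\A$ (the choice of $u_i$ depends only on the earlier pairs, which are identical), hence the same output $\gamma\le\ee$.

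For the inequality $N\ge\N_d(\U,\tfrac{\ee}{L})$, suppose for contradiction that $N<\N_d(\U,\tfrac{\ee}{L})$. Then $\{u_1,\dots,u_N\}$ is not an $\tfrac{\ee}{L}$-metric cover of $\U$, so there is $u^\star\in\U$ with $d(u^\star,u_i)>\tfrac{\ee}{L}$ for every $i$; by Corollary \ref{cor.EquivDistances2}(i) this yields $\|u^\star-u_i\|>\tfrac{\ee}{L}\|u^\star\|$ for all $i$. Let $\rho(u)=\min_i\|u-u_i\|$, which is $1$-Lipschitz and vanishes on the queries, fix any unit vector $e\in\LL_2$, and set $H'(u)=L\,\rho(u)\,e$. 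Then $\|H'(u)-H'(v)\|=L|\rho(u)-\rho(v)|\le L\|u-v\|$, so $H'\in{\rm Lip}_L(\U,\LL_2)$, while $H'(u_i)=0$ for all $i$ and $\|H'\|\ge \|H'(u^\star)\|/\|u^\star\| = L\rho(u^\star)/\|u^\star\|>\ee$. By the probing argument $\A$ outputs the same $\gamma\le\ee$ on $H'$, contradicting $\gamma\ge\|H'\|>\ee$. Hence every valid $N$-sample algorithm has $N\ge\N_d(\U,\tfrac{\ee}{L})$, which is exactly $\N_{\LL_2}(\U,\ee,L)\ge\N_d(\U,\tfrac{\ee}{L})$.

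For the ``moreover'' statement, fix $N$ and suppose some $N$-sample algorithm provides $\ee$-close overestimation for \emph{all} Lipschitz operators. Probe with $H_0\equiv 0$ to obtain $u_1,\dots,u_N\in\U$ and $\gamma\le\ee$ as above. Since $\U$ is infinite and $\{u_1,\dots,u_N\}$ is finite, choose $v^\star\in\U\setminus\{u_1,\dots,u_N\}$, so $\rho(v^\star)=\min_i\|v^\star-u_i\|>0$. Now the Lipschitz constant is a free parameter: pick any $L>\ee\|v^\star\|/\rho(v^\star)$ and define $H'(u)=L\,\rho(u)\,e$ exactly as before. Then $H'$ is $L$-Lipschitz, $H'(u_i)=0$ for all $i$, and $\|H'\|\ge L\rho(v^\star)/\|v^\star\|>\ee$, so $\A$ again outputs the same $\gamma\le\ee<\|H'\|$, a contradiction.

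I expect the adaptivity issue to be the only real obstacle, and the ``probe with the zero operator first, then perturb to an indistinguishable $H'$'' device is what removes it; everything else is the geometric dictionary of Proposition \ref{thm.EquivalentOfDistances} and Corollary \ref{cor.EquivDistances2} together with the explicit choice of $H'$ as a scaled distance-to-the-queries function, whose $L$-Lipschitzness and lower bound on $\|H'\|$ are immediate. (Throughout I take the sampling rule to be deterministic, which is harmless since its guarantee is a worst-case one over $H$.)
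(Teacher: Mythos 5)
Your proof of the main inequality $N\ge\N_d(\U,\tfrac{\ee}{L})$ is correct and essentially identical to the paper's: probe with the zero operator to pin down the (adaptive) query sequence, then build the indistinguishable adversary $H'(u)=L\,\rho(u)\,e$ with $\rho(u)=\min_i\|u-u_i\|$, and use Corollary \ref{cor.EquivDistances2} to convert the failure of the metric cover into $\|H'\|>\ee$. Your explicit induction on the transcript to handle adaptivity is a point the paper glosses over ("$\A$ cannot differentiate between $H_0$ and $H_1$"), but it is the same idea.

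For the ``moreover'' clause you take a genuinely different and, I think, cleaner route. The paper derives it from the first part by proving that $\delta\mapsto\N_d(\U,\delta)$ is unbounded when $\U$ is infinite: it picks $N$ distinct points $v_1,\dots,v_N$, sets $\eta=\min_{i\ne j}\tfrac{\|v_i-v_j\|}{\|v_i\|+\|v_j\|}$, and shows no two of them fit in a single metric ball of radius $\delta<\tfrac{\eta}{\eta+1}$, so $\N_d(\U,\delta)\ge N$; then $\N_d(\U,\tfrac{\ee}{L})\to\infty$ as $L\to\infty$ finishes the argument. You instead bypass the cover index entirely: after probing with $H_0$ you pick any unqueried $v^\star$, note $\rho(v^\star)>0$, and exploit that the Lipschitz constant is unconstrained by choosing $L>\ee\|v^\star\|/\rho(v^\star)$ in the same adversarial construction. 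Both are valid; the paper's version has the side benefit of establishing the quantitative fact that the cover index blows up (which it reuses implicitly in the asymptotic results later), while yours is shorter and makes the real reason for impossibility — an unbounded Lipschitz class — more transparent. No gaps.
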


\begin{proof}
We start by proving the first claim through contradiction. We assume, without loss of generality, that $N=\N_d(\U,\frac{\ee}{L})-1$, and want to show that $\A$ cannot provide $\ee$-close overestimation of the $\LL_2$-gain for any $H\in{\rm Lip}_L(\U,\LL_2)$. Let $H_0:\U\to\LL_2$ be the zero operator, defined by $H_0(u) = 0$ for all $u\in\U$. Run $\A$ on $H_0$, taking a total of $N$ samples from $\mathcal{U}$. We denote these samples by $u_1,\cdots,u_N$, let $y_i = H_0(u_i) = 0$ for $i=1,\cdots N$, and define a function $\rho:\U \to \R$ by $\rho(u) = \min_{i=1,\cdots,N} {\|u - u_i\|}$.
The function $\rho$ can be easily verified to be $1$-Lipschitz. Now, take any function $f\in \LL_2$ such that $\|f\|=1$, and let $H_1:\U \to \LL_2$ be defined as $H_1(u) = L\rho(u)f$. We first note that $H_1$ is $L$-Lipschitz. Indeed, for any $u,v\in \U$:
\begin{align*}
\|H_1(u)-H_1(v)\| = L|\rho(u)-\rho(v)|\|f\| \le L\cdot ||u-v||,
\end{align*}
where we use the fact that $\rho$ is 1-Lipschitz. Moreover, the definition of $\rho$ shows that $H_1(u_i) = 0 = y_i$ for $i=1,\cdots,N$. Thus, during the course of its run, $\A$ cannot differentiate between $H_0$ and $H_1$, and it issues the same estimate $\gamma$ for the $\LL_2$-gain of both. As the algorithm always issues an overestimate of the true $\LL_2$-gain, we have that $\gamma$ is no smaller than the $\LL_2$-gain of both $H_0$ and $H_1$, the former being equal to $0$. Thus, it's enough to prove that the $\LL_2$-gain of $H_1$ is bigger than $\ee$. 

By definition of the cover index, there exists some $u \in \U$ such that $d(u,u_i) > \frac{\ee}{L}$ for all $i=1,\cdots,N$. By Corollary \ref{cor.EquivDistances2}, we get that $\|u-u_i\| > \frac{\ee}{L}\|u\|$ for all $i=1,\cdots,N$, which implies that $\|H_1(u)\| = L\max_i \|u-u_i\| > \ee\|u\|$. In particular, the $\mathcal{L}_2$-gain of $H_1$ is bigger than $\ee$, so the output of the algorithm $\A$ must be bigger than $\ee$, and the error that $\A$ produces on $H_0$ is bigger than $\ee$. Thus, the algorithm $\A$ must take at least $\N_d(\U,\frac{\ee}{L})$ measurements in order to give an $\ee$-close overestimate of the $\mathcal{L}_2$-gain for any $L$-Lipschitz operator $\U\to \mathcal{L}_2(I)$, and $\N_{\mathcal{L}_2}(\ee,L,\U) \ge \N_d(\U,\frac{\ee}{L})$.

As for the second part of the theorem, it is enough to show that for any fixed $\ee>0$, $\N_d(\U,\frac{\ee}{L}) \to \infty$ as $L\to \infty$. Equivalently, we need to show that $\N_d(\U,\delta)\to\infty$ as $\delta \to 0$. As $\delta_1 < \delta_2$ implies that metric balls of size $\delta_1$ are smaller than metric balls of size $\delta_2$, we conclude that $\delta \mapsto \N_d(\U,\delta)$ is non-descending. Thus, it's enough to prove it is unbounded. Fix an arbitrary $N>0$, and we show that $\N_d(\U,\delta) \ge N$ for some $\delta > 0$. As $\U$ is infinite, we can find $N$ different points $v_1,\cdots,v_N$ in $\U$.  We now define $\eta = \min_{i,j} \frac{\|v_i-v_j\|}{\|v_i\|+\|v_j\|}$, and let $\delta < \frac{\eta}{\eta+1}$, so that $\frac{\delta}{1-\delta} < \eta$. We claim that no two points $v_i,v_j$ for $i\neq j$ can be within the same metric ball of radius $\delta$. Indeed,  suppose there exists some point $x$ such that $v_i,v_j$ are inside the metric ball of radius $\delta$ around $x$, where $i\neq j$. Then $d(x,v_i),d(x,v_j) \le \delta$, meaning that $\|x-v_i\| \le \frac{\delta}{1-\delta}\|v_i\|$ and  $\|x-v_j\| \le \frac{\delta}{1-\delta}\|v_j\|$. By the triangle inequality, we get $\|v_i-v_j\| \le \frac{\delta}{1-\delta}(\|v_i\|+\|v_j\|)$, or $\frac{\|v_i-v_j\|}{\|v_i\|+\|v_j\|} \le \frac{\delta}{1-\delta}<\eta$, which cannot hold by the definition of $\eta$. Thus, no two of the points $v_1,\cdots,v_N$ can lie in the same metric ball of radius $\delta$, hence $\N_d(\U,\delta) \ge N$. This completes the proof of the Theorem.
\end{proof}

After achieving a lower bound for the sample complexity of providing $\ee$-close estimation of the $\LL_2$-gain, we move to give an upper bound on the sample complexity of providing $\ee$-close approximation of the system in the operator norm:
\begin{thm}\label{thm.UpperBound}
Let $\U$ be any subset of $\LL_2$ which does not contain $0$, and assume $N=\N_d(\U,\frac{\ee}{2L+\ee})<\infty$. There exists an $N$-sample norm-approximation algorithm which provides $\ee$-close operator norm approximation for all $H\in {\rm Lip}_L(\U,\LL_2)$. In particular, $\N_{\rm op}(\U,\ee,L) \le\N_d(\U,\frac{\ee}{2L+\ee})$.
\end{thm}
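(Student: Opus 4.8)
The plan is to realise the sampled inputs as the centres of a minimal metric cover of $\U$ and to let $H_1$ be \emph{any} $L$-Lipschitz interpolant of the sampled data. Concretely, set $r=\frac{\ee}{2L+\ee}$ and fix a minimal $r$-metric cover of $\U$, say with centres $u_1,\dots,u_N\in\U$ — this is exactly what $\N_d(\U,r)=N$ counts. Since $0\notin\U$, each $u_i$ is nonzero, so the projective distances involved are defined. The sampling algorithm simply queries $H$ at $u_1,\dots,u_N$; this is a fixed, non-adaptive choice, which trivially respects the causality constraint in the definition of a sampling algorithm, and it returns $y_i=H(u_i)$.

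Next I would construct the output operator. Because $H\in{\rm Lip}_L(\U,\LL_2)$, the partial data $\{(u_i,y_i)\}_{i=1}^N$ is $L$-Lipschitz on $\{u_1,\dots,u_N\}$: indeed $\|y_i-y_j\|=\|H(u_i)-H(u_j)\|\le L\|u_i-u_j\|$. Hence the family of $L$-Lipschitz operators $\U\to\LL_2$ interpolating this data is non-empty — $H$ itself belongs to it, and moreover Kirszbraun's extension theorem (valid since $\LL_2$ is a Hilbert space) produces such an interpolant directly from $\{(u_i,y_i)\}$. Let $f$ be a map sending the data to one such interpolant $H_1$; then the pair $(\{u_i\},f)$ is a legitimate $N$-sample norm-approximation algorithm on ${\rm Lip}_L(\U,\LL_2)$.

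It remains to bound $\|H-H_1\|$. Fix any $u\in\U$ and choose an index $i$ with $d(u,u_i)\le r$, which exists because the $u_i$ form an $r$-cover. Proposition~\ref{thm.EquivalentOfDistances}(i) then gives $\|u-u_i\|\le\frac{r}{1-r}\|u\|$, and the choice $r=\frac{\ee}{2L+\ee}$ makes $\frac{r}{1-r}=\frac{\ee}{2L}$, so $\|u-u_i\|\le\frac{\ee}{2L}\|u\|$. Using that both $H$ and $H_1$ are $L$-Lipschitz and that $H_1(u_i)=y_i=H(u_i)$,
\begin{align*}
\|H(u)-H_1(u)\|\le\|H(u)-H(u_i)\|+\|H_1(u_i)-H_1(u)\|\le 2L\|u-u_i\|\le\ee\|u\|.
\end{align*}
Since $\|u\|>0$ and $u\in\U$ was arbitrary, $\frac{\|H(u)-H_1(u)\|}{\|u\|}\le\ee$ for all $0\neq u\in\U$, i.e. $\|H-H_1\|\le\ee$. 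As $H\in{\rm Lip}_L(\U,\LL_2)$ was arbitrary, the algorithm provides $\ee$-close operator approximation, whence $\N_{\rm op}(\U,\ee,L)\le N=\N_d(\U,\frac{\ee}{2L+\ee})$.

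The arithmetic and the triangle inequality are routine; the two points that need care are (i) producing $H_1$ as a genuine element of ${\rm Lip}_L(\U,\LL_2)$ rather than a merely measurable interpolant — this is where Kirszbraun, or simply the remark that $H$ itself interpolates the data, is used — and (ii) ensuring the cover centres may be taken inside $\U$, so that they are admissible inputs. The factor $2L$ (rather than $L$) appearing in the radius $\frac{\ee}{2L+\ee}$ is exactly the price of the split $\|H(u)-H_1(u)\|\le\|H(u)-H(u_i)\|+\|H_1(u_i)-H_1(u)\|$, in which each term is controlled by $L\|u-u_i\|$; I expect this bookkeeping, together with pinning down the interpolant, to be the main obstacle, the rest being a direct application of Proposition~\ref{thm.EquivalentOfDistances}.
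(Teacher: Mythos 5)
Your proposal is correct and follows essentially the same route as the paper: sample at the centres of a minimal $\frac{\ee}{2L+\ee}$-metric cover, output any $L$-Lipschitz interpolant of the data (non-empty since $H$ itself qualifies), and use Proposition~\ref{thm.EquivalentOfDistances} with $\frac{r}{1-r}=\frac{\ee}{2L}$ plus the triangle inequality to get $\|H(u)-H_1(u)\|\le 2L\|u-u_i\|\le\ee\|u\|$. The only cosmetic difference is that you drop the vanishing middle term $\|H(u_i)-H_1(u_i)\|$ from the split, and you explicitly flag the (shared, implicit) assumption that the cover centres lie in $\U$.
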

\begin{proof}
Let $u_1,\cdots,u_N$ be a cover of $\U$ using metric balls of radius $\frac{\ee}{2L+\ee}$. Given the unknown $L$-Lipschitz operator $H$, we make measurements of the form $(u_i,y_i=H(u_i))$ for $i=1,\cdots,N$. Take an arbitrary operator $H_1$ in the set $
\left\{G:\U\to \mathcal{L}_2|\ G\text{ is $L$-Lipschitz }, G(u_i)=y_i, i=1,\cdots,N\right\}.$
The set is nonempty as it contains $H$. We claim that $\|H_1-H\|\le \ee$. Indeed, take any $u \in \U$, and we want to show that $\|H(u) - H_1(u)\| \le \ee\|u\|$. By Definition \ref{def.Cover}, there exists a point $u_i$ such that $d(u,u_i) < \eta=\frac{\ee}{2L+\ee}$. Proposition \ref{thm.EquivalentOfDistances} implies that $\|u-u_i\|\le\frac{\eta}{1-\eta}\|u\| = \frac{\ee}{2L}\|u\|$. By the triangle inequality, and $H,H_1$ being $L$-Lipschitz operators, we get:\
\begin{align*}
\|H(u)-H_1(u)\| &\le \|H(u) - H(u_i)\| + \|H(u_i) - H_1(u_i)\| +\\& ~~~~\|H_1(u_i) - H_1(u)\| \\&\le L\|u-u_i\|+\|y_i-y_i\|+L\|u-u_i\| \\&\le 2L\frac{\ee}{2L}\|u\| = \ee\|u\|.
\end{align*}\normalsize
Thus, the algorithm we suggested, sampling the inputs $u_1,\cdots,u_N$ and taking some $L$-Lipschitz operator consistent with the data, provides a solution to the norm approximation problem with error no more than $\ee$. 
\end{proof}
\begin{rem}
The algorithm solving the norm-approximation problem needs to take an arbitrary point from the set 
\begin{align*}
\left\{G:\U\to \mathcal{L}_2:\ G\text{ is $L$-Lipschitz }, G(u_i)=y_i, i=1,\cdots,N\right\}.
\end{align*}
We know that the set is nonempty, as the operator $H$ lies inside it. However, constructing a point within this set can be difficult. Generally, this part is equivalent to an Empirical Risk Minimization (ERM) step, which tries to find an $L$-Lipschitz operator $G$ which minimizes $\sum_{i=1}^n \|y_i - G(u_i)\|^2$. It is known that in some cases, solving the ERM problem can be computationally hard as the size of $\U$ increases \cite{ShalevShwartz2014}. However, in the case we present here, we can use Kirszbraun's theorem from nonlinear functional analysis \cite{Schwartz1969}, constructing $G$ as a piecewise linear function by linear interpolation \cite{Akopyan2008}, or even give an explicit formula for it \cite{Azagra2018}.
\end{rem}

Combining Theorem \ref{thm.LowerBound} and Theorem \ref{thm.UpperBound}, we get the following result on the sample complexity of learning the $\mathcal{L}_2$-gain of an $L$-Lipschitz operator, and of approximating it in the operator norm:
\begin{thm}
Let $\U \subseteq \mathcal{L}_2$ be a subset with infinitely many elements which does not contain $0$, and let $L > 0$ be any number. Then for any $\ee < L$, the following inequality holds:
\begin{align*}
\N_{\rm op} \left(\U,\frac{2\ee}{1-\ee/L},L\right) \le \N_{\mathcal{L}_2}(\U,\ee,L) \le \N_{\rm op} \left(\U,\frac{\ee}{2},L\right).
\end{align*}
In particular, for every $\ee < L/2$, we get:
\begin{align*}
\N_{\rm op} \left(\U,4\ee,L\right) \le \N_{\mathcal{L}_2}(\U,\ee,L) \le \N_{\rm op} \left(\U,\frac{\ee}{2},L\right). 
\end{align*}
\end{thm}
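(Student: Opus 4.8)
The plan is to assemble the statement by chaining the three facts already established --- Remark~\ref{rem.NaturalHarder}, Theorem~\ref{thm.LowerBound}, and Theorem~\ref{thm.UpperBound} --- together with the elementary monotonicity of $\N_{\rm op}(\U,\cdot,L)$ in its accuracy parameter (a coarser approximation target can only make the problem easier, as observed in the remarks above). No new construction is needed.

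For the right-hand inequality $\N_{\mathcal{L}_2}(\U,\ee,L)\le\N_{\rm op}(\U,\ee/2,L)$, one simply quotes Remark~\ref{rem.NaturalHarder}, which already proves exactly this for every $\ee>0$ by feeding an $\ee/2$-close operator approximation into a triangle-inequality estimate of the gain.

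For the left-hand inequality, first apply Theorem~\ref{thm.LowerBound} to get $\N_{\mathcal{L}_2}(\U,\ee,L)\ge\N_d(\U,\ee/L)$. Then invoke Theorem~\ref{thm.UpperBound}, but with a shifted accuracy $\ee'$ chosen so that the cover radius there coincides with the one just obtained, i.e.\ so that $\frac{\ee'}{2L+\ee'}=\frac{\ee}{L}$. Inverting this relation gives $\ee'=\frac{2L\ee}{L-\ee}=\frac{2\ee}{1-\ee/L}$, which is positive precisely because of the hypothesis $\ee<L$; Theorem~\ref{thm.UpperBound} then yields $\N_{\rm op}\bigl(\U,\tfrac{2\ee}{1-\ee/L},L\bigr)\le\N_d(\U,\ee/L)$, and combining the two displays gives the first claimed inequality. (If $\N_d(\U,\ee/L)=\infty$, then $\N_{\mathcal{L}_2}(\U,\ee,L)=\infty$ as well and the inequality is vacuous, so no finiteness caveat is needed.)

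Finally, for the ``in particular'' version, note that $\ee<L/2$ forces $1-\ee/L>1/2$ and hence $\frac{2\ee}{1-\ee/L}<4\ee$; since $\N_{\rm op}(\U,\cdot,L)$ is non-increasing in its accuracy argument, $\N_{\rm op}(\U,4\ee,L)\le\N_{\rm op}\bigl(\U,\tfrac{2\ee}{1-\ee/L},L\bigr)$, and chaining with the bound already proved closes the argument. The only place requiring any care is the algebraic inversion producing the radius $\frac{2\ee}{1-\ee/L}$ and the verification of its positivity; everything else is direct substitution into results already in hand, so no genuine obstacle is anticipated.
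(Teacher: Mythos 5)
Your proposal is correct and follows essentially the same route as the paper: the right-hand inequality is Remark~\ref{rem.NaturalHarder}, the left-hand inequality comes from matching the cover radius of Theorem~\ref{thm.UpperBound} to that of Theorem~\ref{thm.LowerBound} via the substitution $\eta=\frac{2\ee}{1-\ee/L}$, and the ``in particular'' clause follows from monotonicity of $\N_{\rm op}$ in its accuracy argument together with $\frac{2\ee}{1-\ee/L}<4\ee$ for $\ee<L/2$. The algebraic inversion and the positivity check are exactly what the paper does, so nothing is missing.
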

Informally, if $N$ samples suffice to give an $\ee$-close overestimation of the $\LL_2$-gain of a system, they suffice to give an $4\ee$-close approximation of the system in the operator norm.
\begin{proof}
The right half of the inequality follows from Remark \ref{rem.NaturalHarder}. As for the left half, for any $\eta > 0$, we have
$
\N_{\rm op}(\U,\eta,L) \le \N_d(\U,\frac{\eta}{2L+\eta})
$
and
$
\N_{\LL_2}(\U,\ee,L) \ge \N_d(\U,\frac{\ee}{L}).
$
We choose $\eta = \frac{2\ee}{1-\ee/L}$, so that $\frac{\ee}{L} = \frac{\eta}{2L+\eta}$ gives $\N_{\rm op}(\U,\eta,L) \le \N_{\LL_2}(\U,\ee,L)$, proving the first part of the corollary. As for the second part, we note that $\N_{\rm op} (\U,\eta,L)$ decreases as $\eta$ increases, and that $\frac{2\ee}{1-\ee/L} < 4\ee$ whenever $\ee < L/2$.
\end{proof}

\section{Estimating the Covering Index}
In the previous section, we showed the sample complexities $\N_{\rm op}(\U,\ee,L)$ and $\N_{\LL_2}(\U,\ee,L)$ are connected to the metric cover index $\N_d(\U,\cdot)$. We want to understand how these grow as the tolerance level $\ee$ decreases.  We do so by estimating $\N_d(\U,\eta)$ for $0<\eta<1$ using the norm cover index, as it has been extensively studied, mainly in the fields of learning theory and Gaussian processes \cite{ShalevShwartz2014,Dudley1967}. We first connect the metric cover index to the norm cover index:
\begin{prop} \label{prop.NormCoverEquivalence}
For any set $\U\subseteq \LL_2$ which does not contain $0$, and any $\eta\in(0,1)$,
\begin{align*}
\N_{\|\cdot\|}\left(\U,\frac{\eta}{1-\eta} \max_{u\in\U}\|u\|\right) \le \N_d(\U,\eta) \le \N_{\|\cdot\|}(\U,\eta\min_{u\in\U}\|u\|)
\end{align*}\normalsize
\end{prop}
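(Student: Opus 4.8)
The plan is to prove the two inequalities separately, and in both directions the idea is the same: take a cover that achieves one of the two cover indices and recycle its ball centers, reinterpreting an $r$-norm cover as an $\eta$-metric cover (or vice versa) by invoking the two-sided comparison of norm and metric balls in Proposition \ref{thm.EquivalentOfDistances}. No new geometry is needed beyond that proposition.

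For the right-hand inequality $\N_d(\U,\eta)\le\N_{\|\cdot\|}(\U,\eta\min_{u\in\U}\|u\|)$, I would start from a minimal $r$-norm cover $\{x_i\}_{i=1}^n$ of $\U$ with $r=\eta\min_{u\in\U}\|u\|$ and $n=\N_{\|\cdot\|}(\U,r)$. Before using these centers as metric-ball centers one must know they are nonzero, since the projective distance is only defined away from the origin. This is where the hypotheses $0\notin\U$ and $\eta<1$ enter: because $\eta<1$ we have $r<\min_{u\in\U}\|u\|$, so the norm ball of radius $r$ around $0$ is disjoint from $\U$; and in a minimal cover every ball must meet $\U$ (otherwise it could be deleted), so no center equals $0$. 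Having cleared this, for each $u\in\U$ choose $i$ with $\|u-x_i\|\le r=\eta\min_{v\in\U}\|v\|\le\eta\|u\|$ and apply part ii) of Proposition \ref{thm.EquivalentOfDistances} (with $\ee=\eta$) to get $d(u,x_i)\le\eta$. Hence $\{x_i\}_{i=1}^n$ is an $\eta$-metric cover of $\U$, giving $\N_d(\U,\eta)\le n$.

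For the left-hand inequality $\N_{\|\cdot\|}(\U,\frac{\eta}{1-\eta}\max_{u\in\U}\|u\|)\le\N_d(\U,\eta)$, I would start from a minimal $\eta$-metric cover $\{x_i\}_{i=1}^n$ of $\U$ with $n=\N_d(\U,\eta)$; here the centers lie in $\LL_2\setminus\{0\}\subseteq\LL_2$, so they are already legitimate norm-ball centers and there is nothing to check. For each $u\in\U$ pick $i$ with $d(u,x_i)\le\eta$, and apply part i) of Proposition \ref{thm.EquivalentOfDistances} to obtain $\|u-x_i\|\le\frac{\eta}{1-\eta}\|u\|\le\frac{\eta}{1-\eta}\max_{v\in\U}\|v\|$. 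Thus $\{x_i\}_{i=1}^n$ is a norm cover of $\U$ with the stated radius, so $\N_{\|\cdot\|}(\U,\frac{\eta}{1-\eta}\max_{v\in\U}\|v\|)\le n$.

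The only genuinely nonroutine point is the bookkeeping in the first direction ensuring the origin is never used as a metric-ball center; the conditions $0\notin\U$ and $\eta<1$ are precisely what make this work, and once it is handled both halves are one-line substitutions into Proposition \ref{thm.EquivalentOfDistances}. I therefore expect no substantive obstacle.
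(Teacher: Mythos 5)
Your proof is correct and takes essentially the same route as the paper, whose entire argument is to cite Proposition \ref{thm.EquivalentOfDistances} to convert covers of one type into covers of the other; you have simply written the substitution out explicitly. The only addition is your check that a minimal norm cover never uses $0$ as a center, a genuine but minor edge case that the paper's one-line proof leaves implicit.
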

\begin{proof}
Follows immediately from Proposition \ref{thm.EquivalentOfDistances}, which shows that any metric ball of radius $\eta$ contains a norm ball of radius at least $\eta\min_{u\in \U}\|u\|$, and is contained in a norm ball of radius at most $\frac{\eta}{1-\eta}\max_{u\in \U}\|u\|$.
\end{proof}
Thus, it's enough to estimate the cover index $\N_{\|\cdot\|}(\U,\cdot)$:

\begin{prop} \label{thm.CoverAsymptotics}
Suppose that $\U$ is a compact subset of a finite dimensional subspace $F$ of $\LL_2$, such that $\U \subseteq F$ has a nonempty interior. There exist constants $D_1,D_2>0$, depending only on $\U$, such that for any $\delta > 0$, 
\begin{align} \label{eq.CoverAsymptotics}
D_1 \delta^{-\dim(F)} \le \N_{\|\cdot\|}(\U,\delta) \le D_2 \delta^{-\dim(F)}
\end{align}
\end{prop}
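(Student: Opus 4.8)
The plan is to run the standard volume-comparison argument inside the finite-dimensional space $F$. First I would note that, since the $\LL_2$ norm on $F$ comes from an inner product, $(F,\|\cdot\|)$ is isometric to Euclidean $\R^{n}$ with $n=\dim(F)$; fix such an identification, let $\vol$ be the induced Lebesgue measure on $F$, and set $\omega_n=\vol(B_{\|\cdot\|,0,1})$, the volume of the unit norm ball. Then a norm ball of radius $\delta$ in $F$ has volume exactly $\omega_n\delta^{n}$, and this clean scaling is the entire source of the exponent $-\dim(F)$.

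For the lower bound, I would take an arbitrary $\delta$-norm cover $\{x_1,\dots,x_M\}$ of $\U$, so $\U\subseteq\bigcup_{i=1}^{M}B_{\|\cdot\|,x_i,\delta}$, and pass to volumes: $\vol(\U)\le\sum_{i=1}^{M}\vol(B_{\|\cdot\|,x_i,\delta})=M\,\omega_n\,\delta^{n}$. Since $\U$ has nonempty interior in $F$ it contains a small norm ball, hence $\vol(\U)>0$; putting $D_1=\vol(\U)/\omega_n>0$, which depends only on $\U$, this gives $M\ge D_1\delta^{-n}$ for every cover, i.e. $\N_{\|\cdot\|}(\U,\delta)\ge D_1\delta^{-n}$. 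This half holds for all $\delta>0$ without restriction.

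For the upper bound I would use a packing argument. By compactness (equivalently, total boundedness) there is a finite maximal $\delta$-separated subset $\{x_1,\dots,x_M\}$ of $\U$, with $\|x_i-x_j\|>\delta$ for $i\neq j$; by maximality every point of $\U$ is within $\delta$ of some $x_i$, so this set is a $\delta$-cover and $\N_{\|\cdot\|}(\U,\delta)\le M$. On the other hand the open balls $B_{\|\cdot\|,x_i,\delta/2}$ are pairwise disjoint, and each lies in $B_{\|\cdot\|,x_1,R+\delta/2}$ with $R=\mathrm{diam}(\U)$; comparing volumes gives $M\,\omega_n(\delta/2)^{n}\le\omega_n(R+\delta/2)^{n}$, that is $M\le(2R/\delta+1)^{n}$. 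Restricting to $\delta\le R$ — the only regime in which more than one ball is ever needed, and the regime of interest as $\delta\to0$ — this is at most $(3R/\delta)^{n}$, so $D_2=(3\,\mathrm{diam}(\U))^{n}$ works and depends only on $\U$. Combining the two bounds yields \eqref{eq.CoverAsymptotics}.

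Main obstacle and subtleties: there is no deep difficulty — the content is the two one-line volume inequalities — but three points want care. First, the constants must be independent of $\delta$: this is exactly what the $\delta^{n}$-scaling of ball volumes delivers, and it is why both hypotheses are used, boundedness (a finite $R$) for the upper bound and nonempty interior ($\vol(\U)>0$) for the lower one. Second, the upper estimate genuinely needs $\delta$ bounded above, say by $\mathrm{diam}(\U)$; for larger $\delta$ one has $\N_{\|\cdot\|}(\U,\delta)=1$ and the displayed inequality is to be read asymptotically as $\delta\to0$. Third, everything is measured against one fixed Euclidean structure on $F$; any other choice differs by a linear isomorphism of bounded distortion, so it only rescales $D_1,D_2$ by constants and leaves the exponent untouched, which is why the statement is unambiguous.
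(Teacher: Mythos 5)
Your proof is correct. The lower bound is essentially the paper's argument: both identify $F$ with $\R^{n}$, compare $\vol(\U)>0$ (from the nonempty interior) with the total volume $M\omega_n\delta^{n}$ of the covering balls, and read off $D_1=\vol(\U)/\omega_n$. The upper bound is where you genuinely diverge: the paper builds an explicit $\delta$-grid in an orthonormal basis of $F$, bounds coordinates by $\lceil\rho\rceil$ with $\rho=\sup_{u\in\U}\|u\|$, and uses Parseval to show the grid is a $\delta\sqrt{n}$-cover, which after rescaling gives $D_2=(\sqrt{n}\lceil\rho\rceil)^{n}$; you instead take a maximal $\delta$-separated subset, observe it is automatically a $\delta$-cover, and bound its cardinality by packing disjoint balls of radius $\delta/2$ into a ball of radius $\mathrm{diam}(\U)+\delta/2$. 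Your route is arguably tidier in that it reuses the same volume machinery for both directions and produces a constant in terms of $\mathrm{diam}(\U)$ with no $\sqrt{n}$ loss, while the paper's grid has the virtue of being fully explicit (no appeal to a maximal separated set). One shared caveat: for large $\delta$ the right-hand side of \eqref{eq.CoverAsymptotics} drops below $1$ while the cover index is always at least $1$, so the upper bound can only hold for $\delta$ bounded above (equivalently, after enlarging $D_2$ on a compact range of $\delta$); the paper's proof silently has the same restriction, and you are right to flag it explicitly as the intended asymptotic reading.
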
\normalsize

\begin{proof}
We start with the upper bound, for which we may assume without loss of generality that $\delta^{-1}$ is a positive integer. We let $n=\dim(F)$ and $\rho = \sup_{u\in \U} \|u\|$, and choose an orthonormal basis $f_1,\cdots,f_n$ to $F$.
We define the set $C$ as the collection of all points in $F$ of the form $\sum_{i=1}^n c_i f_i$, where the constants $c_i$ are taken in the set $\{\frac{m}{n}:\ m\in\{-\lceil\rho\rceil,-\lceil\rho\rceil+\delta,-\lceil\rho\rceil+2\delta,\cdots,\lceil\rho\rceil\}\}$. We claim that the set $C$ constitutes a $\delta$-norm cover for $\U$. Indeed, given a point $u\in \U$, we can write $u=\sum_{i=1}^n a_i f_i$ when $\sum_{i=1}^n a_i^2 = \|u\|\le \rho$, and in particular $\max_i|a_i| \le \rho$. We can thus find some $c=\sum_{i=1}^n c_i f_i\in C$ such that $\max_i |a_i-c_i| \le \delta$, and $\|u-c\| \le \delta\sqrt{n}$ by Parseval's equality. In particular, $C$ is a $\delta\sqrt{n}$-norm cover of $\U$, and it has $\left(\frac{\lceil\rho\rceil}{\delta}\right)^n$ points. We thus get that:
$
\N_{\|\cdot\|}(\U,\delta\sqrt{n}) \le \left(\frac{\lceil\rho\rceil}{\delta}\right)^n.
$
By replacing $\delta$ with $\delta/\sqrt{n}$, we get:
\begin{align*}
\N_{\|\cdot\|}(\U,\delta) \le \left(\frac{\lceil\rho\rceil\sqrt{n}}{\delta}\right)^n = D_2\left(\frac{1}{\delta}\right)^n,
\end{align*}
where $D_2 = (\sqrt{n}\lceil\rho\rceil)^n>0$ is a constant depending on $\U$.

We now move to the left-hand side of the inequality \eqref{eq.CoverAsymptotics}. Consider the invertible linear map $S:F\to \R^n$ defined by $F(\sum_{i=1}^n a_i f_i) = (a_i)_{i=1}^n$. By definition, given any $\delta > 0$, we can find points $\{x_i\}_{i=1}^{\N_{\|\cdot\|}(\U,\delta)}$ such that
$
\U \subseteq \bigcup_{i=1}^{\N_{\|\cdot\|}\U,\delta)} \B_{\|\cdot\|,x_i,\delta},
$
where we recall that $\B_{\|\cdot\|,x_i,\delta}$ is the norm ball around $x_i$ of radius $\delta$. By applying $S$, we conclude that
$
S(\U) \subseteq \bigcup_{i=1}^{\N_{\|\cdot\|}(\U,\delta)} B_{\R^n,z_i,\delta},
$
where $z_i = S(x_i)$ and $B_{\R^n,z_i,\delta}$ is the ball around $z_i \in \R^n$ of radius $\delta$ with respect to the Euclidean distance. Let $\vol(\cdot)$ be the volume in $\R^n$, as computed using the Lebesgue measure on $\R^n$. Then:
\begin{small}
\begin{align*}
\vol(S(\U)) \le \vol\left(\bigcup_{i=1}^{\N_{\|\cdot\|}(\U,\delta)} B_{\R^n,z_i,\delta}\right)\le \sum_{i=1}^{\N_{\|\cdot\|}(\U,\delta)}\vol\left(B_{\R^n,z_i,\delta}\right),
\end{align*}
\end{small}
which is equal to  $\N_{\|\cdot\|}(\U,\delta)V_n\delta^n$, where $V_n$ is the volume of the unit ball in $\R^n=\R^{\dim(F)}$. We thus get that:
$
\N_{\|\cdot\|}(\U,\delta) \ge \frac{\vol(S(\U))}{V_n\delta^n} = D_1 \left(\frac{1}{\delta}\right)^n,
$
where $D_1 = \vol(S(\U)) / V_n$ is a constant which depends only on $\U$.
\end{proof}

Proposition \ref{thm.CoverAsymptotics} gives a two-sided estimate on the norm-cover index of $\U$, which is in turn related to the sample complexities $\N_{\LL_2}(\U,\ee,L)$ and $\N_{\rm op}(\U,\ee,L)$. We prove the following theorem, showing the sample complexities are asymptotically equivalent, which is the main result of this paper.

\begin{thm} \label{cor.Asymptotics}
Suppose that $\U$ is a compact subset of a finite dimensional subspace $F$ of $\LL_2$, such that $\U\subseteq \LL_2$ has a nonempty interior.  For any $c<1$, there exists constants $C_1,C_2>0$, depending only on $\U$ and $c$, such that for any $\ee,L > 0$ with $\ee < cL$,
\begin{align*} 
C_1 \N_{\rm op} (\U,\ee,L) \le \N_{\LL_2} (\U,\ee,L) \le C_2 \N_{\rm op} (\U,\ee,L)
\end{align*}
\end{thm}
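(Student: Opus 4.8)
\emph{Proof plan.} The strategy is to trap both $\N_{\LL_2}(\U,\ee,L)$ and $\N_{\rm op}(\U,\ee,L)$ between metric cover indices $\N_d(\U,\cdot)$ evaluated at radii that are all comparable to $\ee/L$, and then to use Propositions~\ref{prop.NormCoverEquivalence} and~\ref{thm.CoverAsymptotics} to identify each such cover index as a quantity of order $(\ee/L)^{-\dim F}$, with multiplicative constants depending only on $\U$ and $c$. Once both sample complexities are pinned to the same power of $\ee/L$ in this way, they must lie within a constant factor of one another, which is exactly the assertion.

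First I would assemble four inequalities, all valid under the stated hypotheses (recall in particular that, as throughout the paper, $0\notin\U$). Theorem~\ref{thm.LowerBound} gives $\N_{\LL_2}(\U,\ee,L)\ge\N_d(\U,\ee/L)$. Remark~\ref{rem.NaturalHarder} followed by Theorem~\ref{thm.UpperBound} gives $\N_{\LL_2}(\U,\ee,L)\le\N_{\rm op}(\U,\ee/2,L)\le\N_d(\U,\tfrac{\ee}{4L+\ee})$. Theorem~\ref{thm.UpperBound} applied directly gives $\N_{\rm op}(\U,\ee,L)\le\N_d(\U,\tfrac{\ee}{2L+\ee})$. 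Finally, Remark~\ref{rem.NaturalHarder} with $\ee$ replaced by $2\ee$, followed by Theorem~\ref{thm.LowerBound}, gives $\N_{\rm op}(\U,\ee,L)\ge\N_{\LL_2}(\U,2\ee,L)\ge\N_d(\U,2\ee/L)$. Writing $\delta=\ee/L$, the four radii $\tfrac{\delta}{4+\delta}$, $\tfrac{\delta}{2+\delta}$, $\delta$ and $2\delta$ all lie in the interval $[\delta/5,\,2\delta]$, since $0<\delta<c<1$; so everything reduces to controlling $\N_d(\U,\cdot)$ at radii of order $\delta$.

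For the regime $\delta<c/2$, set $m=\min_{u\in\U}\|u\|$ and $M=\max_{u\in\U}\|u\|$, both finite and positive since $\U$ is compact and $0\notin\U$. For any $\eta\in(0,1)$, Proposition~\ref{prop.NormCoverEquivalence} squeezes $\N_d(\U,\eta)$ between $\N_{\|\cdot\|}(\U,\tfrac{\eta}{1-\eta}M)$ and $\N_{\|\cdot\|}(\U,\eta m)$, and Proposition~\ref{thm.CoverAsymptotics} bounds each of these between $D_1$ and $D_2$ times the relevant radius raised to the power $-n$, with $n=\dim F$. Because $\delta<c/2$ makes each of the four radii at most $2\delta<c<1$, with $1-(\text{radius})\ge1-c$ throughout, combining the two propositions yields constants $a_1,a_2>0$ depending only on $\U$ and $c$ with $a_1\delta^{-n}\le\N_{\LL_2}(\U,\ee,L)\le a_2\delta^{-n}$ and $a_1\delta^{-n}\le\N_{\rm op}(\U,\ee,L)\le a_2\delta^{-n}$. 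Dividing the bounds on $\N_{\LL_2}$ by those on $\N_{\rm op}$ gives $\tfrac{a_1}{a_2}\,\N_{\rm op}(\U,\ee,L)\le\N_{\LL_2}(\U,\ee,L)\le\tfrac{a_2}{a_1}\,\N_{\rm op}(\U,\ee,L)$.

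The remaining range $\delta\in[c/2,c)$ is where the ``metric ball $\approx$ norm ball'' equivalence of Proposition~\ref{thm.EquivalentOfDistances} degrades (the factor $1-\eta$ is no longer bounded away from $0$ uniformly), and dealing with it cleanly is the only real subtlety. But here $\delta$ is itself confined to $[c/2,c)$, so the two radii controlling the upper bounds, $\tfrac{\delta}{4+\delta}$ and $\tfrac{\delta}{2+\delta}$, are each at least $c/10$; hence $1\le\N_{\LL_2}(\U,\ee,L),\ \N_{\rm op}(\U,\ee,L)\le\N_d(\U,c/10)=:K$, the finiteness of $K$ following from Proposition~\ref{prop.NormCoverEquivalence} and compactness of $\U$. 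Thus $\tfrac1K\,\N_{\rm op}(\U,\ee,L)\le\N_{\LL_2}(\U,\ee,L)\le K\,\N_{\rm op}(\U,\ee,L)$ on this range, and taking $C_1=\min\{a_1/a_2,\,1/K\}$ and $C_2=\max\{a_2/a_1,\,K\}$ (both depending only on $\U$ and $c$) gives constants valid for every $\ee<cL$. The only genuine labor in the whole argument is the bookkeeping of constants through Propositions~\ref{prop.NormCoverEquivalence} and~\ref{thm.CoverAsymptotics}; no idea beyond the machinery already developed is needed.
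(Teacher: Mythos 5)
Your proposal is correct and follows essentially the same route as the paper: both arguments pin $\N_{\LL_2}(\U,\ee,L)$ and $\N_{\rm op}(\U,\ee,L)$ to the common scale $(L/\ee)^{\dim F}$ by chaining Theorems~\ref{thm.LowerBound} and~\ref{thm.UpperBound} with Remark~\ref{rem.NaturalHarder} and Propositions~\ref{prop.NormCoverEquivalence} and~\ref{thm.CoverAsymptotics}. If anything, your version is more careful than the paper's, since you make the lower bound on $\N_{\rm op}$ explicit via $\N_{\rm op}(\U,\ee,L)\ge\N_{\LL_2}(\U,2\ee,L)$ and handle the regime $\ee/L\in[c/2,c)$ where $2\ee/L$ may exceed $1$, a point the paper glosses over.
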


\begin{proof}
We denote $n = {\rm dim}(F)$. Propositions \ref{prop.NormCoverEquivalence} and \ref{thm.CoverAsymptotics}, Remark \ref{rem.NaturalHarder}, and Theorem \ref{thm.UpperBound} show that:\small
\begin{align*}
&\N_{\LL_2}(\U,\ee,L) \le \N_{\rm op}\left(\U,\frac{\ee}{2}\right) \le \N_{d}\left(\U,\frac{\ee/2}{2L+\ee/2}\right)\le \\ &\N_{\|\cdot\|}\left(\U,\frac{m\ee/2}{2L+\ee/2}\right)  \le D_2 \left(4+\frac{\ee}{L}\right)^n \left(\frac{L}{m\ee}\right)^n \le K_2 \left(\frac{L}{\ee}\right)^n
\end{align*}\normalsize
where $D_2>0$ is the constant in Proposition \ref{thm.CoverAsymptotics}, $m = \min_{u\in \U}\|u\|$, and $K_2 = D_2 \left(\frac{4+c}{m}\right)^n > 0$ is a constant depending only on $\U$ and $c$. Similarly, one can use Propositions \ref{prop.NormCoverEquivalence} and \ref{thm.CoverAsymptotics}, and Theorem \ref{thm.LowerBound} to prove that $\N_{\LL_2}(\U,\ee,L) \ge K_1 \left(L/\ee\right)^{n}$ when $K_1$ is a constant depending only on $\U$ and $c$. Thus, Remark \ref{rem.NaturalHarder} implies that:
\begin{align*}
K_1\left(\frac{L}{\ee}\right)^{n} \le \N_{\LL_2}(\U,\ee,L) \le \N_{\rm op}\left(\U,\frac{\ee}{2},L\right) \le 2^{n}K_2\left(\frac{L}{\ee}\right)^{n}.
\end{align*}
Taking $C_1 = \frac{K_1}{K_2}$ and $C_2 = 4^n\frac{K_1}{K_2}$ completes the proof.
\end{proof}

\begin{rem}
Proposition \ref{thm.CoverAsymptotics} assumes the set $\U$ is contained in some finite-dimensional subspace of $\LL_2$. One can ask what happens when $\mathcal{U}$ is not contained in a finite-dimensional subspace. Given $\delta > 0$, we will say that $\mathcal{U}$ is $\delta$-finite dimensional if there exists a finite dimensional subspace $F_\delta$ such that $\U$ is contained in an $\delta$-neighborhood of $F_\delta$. It is straightforward to prove that if $\U$ is covered by $\B_{\|\cdot\|,x_1,\delta},\cdots,\B_{\|\cdot\|,x_N,\delta}$, then it is contained in an $\delta$-neighborhood of the finite-dimensional space $F = {\rm span}\{x_1,\cdots,x_N\}$. Thus, if $\N_{\|\cdot\|}(\U,\delta)$ is finite then $\U$ is $\delta$-finite dimensional. In that case, one can prove Proposition \ref{thm.CoverAsymptotics} still holds, where $F$ is replaced with $F_\delta$. Otherwise, $\N_{\|\cdot\|}(\U,\delta) = \infty$, so no algorithm can solve the $\LL_2$-estimation nor the norm-approximation problems with high precision (due to Propositions \ref{prop.NormCoverEquivalence} and \ref{thm.CoverAsymptotics} and Theorems \ref{thm.LowerBound}, \ref{thm.UpperBound}).
\end{rem}

\section{Conclusions and Outlook}
In this paper, we study the number of samples needed to give an overestimate of the $\LL_2$-gain of some unknown operator $H$, or to give an approximation to $H$ in the operator norm. These are known as the sample complexities for the said problems. We used the notion of cover indices, and namely the cover index for the projective distance, to give a bound on the sample complexity of overestimating the $\LL_2$-gain of the operator $H$, and of giving an approximation for $H$ in the operator norm. We then studied the asymptotics of the cover index for the projective distance using the cover index for the norm-induced metric, which resulted in a bound on the studied sample complexities, showing that the number of samples needed to give an $\ee$-close upper bound on the $\LL_2$-gain of the operator $H$ is roughly the same as the number of samples needed to give an $\ee$-close approximation for the operator $H$ in the operator norm, which intuitively should required more data. We do not claim that approximation of the nonlinear operator $H$ directly gives rise to a control law, as it will usually be too complex to work with directly, unless one uses a small-gain approach. However, the results show that even though data-driven methods need not learn a model for the system, they cannot avoid using at least the same amount of data. Future research on this problem can try and extend these results in three different ways - switch the $\LL_2$-gain with shortage of passivity and cone constraints, change the sampling model to include noise and disturbances, or to a one in which we cannot choose the input (corresponding to a machine-learning scheme), or consider specific systems instead of general Lipschitz operators, e.g. bi-linear, polynomial, or trigonometric systems.
\vspace{-5pt}
\bibliographystyle{ieeetr}
\bibliography{main}

\end{document}